\newtheorem{Prop}{Proposition}
\def\C{{\bf C}}
\def\D{{\bf D}}
\def\F{{\mathbf F}}
\def\K{{\mathbf K}}
\def\G{{\mathbf G}}
\def\I{{\mathbf I}}
\def\S{{\bf S}}
\def\X{{\bf X}}
\def\x{{\bf x}}
\def\T{{\bf T}}
\def\U{{\bf U}}
\def\V{{\bf V}}
\def\W{{\bf W}}
\def\Y{{\bf Y}}
\def\Z{{\bf Z}}
\def\u{{\bf u}}
\def\v{{\bf v}}
\def\BH{{\bf H}}
\def\BF{{\bf F}}
\def\BG{{\bf G}}
\def\I{{\bf I}}
\def\R{{\bf R}}
\def\P{{\bf P}}
\def\Q{{\bf Q}}
\def\z{{\bf z}}
\def\x{{\bf x}}
\def\y{{\bf y}}
\def\0{{\bf 0}}
\definecolor{ao}{rgb}{0.0, 0.5, 0.0}
\definecolor{brightmaroon}{rgb}{0.76, 0.13, 0.28}
\title{
Testing for  Causal Influence using a Partial Coherence Statistic}
\author{\IEEEauthorblockN{Louis Scharf\IEEEauthorrefmark{1}, ~\IEEEmembership{Fellow,~IEEE}, and
		Yuan Wang\IEEEauthorrefmark{2}},
	\IEEEauthorblockA{\IEEEauthorrefmark{1}Departments of Mathematics and Statistics, Colorado State University, 
		Fort Collins, CO, USA}
		\IEEEauthorblockA{\IEEEauthorrefmark{2}Department of Mathematics and Statistics, Washington State University, Pullman, WA, USA}
	\thanks{This work was supported in part by   the National Science Foundation
		under contract  CCF-1712788. 
		Corresponding author:Y. Wang (email: yuan.wang@wsu.edu).}}
\begin{document}
	
	\markboth{IEEE Transactioin in Signal Processing,~Vol.~XXX, No.~X, May~2021}%
	{Shell \MakeLowercase{\textit{et al.}}: Bare Demo of IEEEtran.cls for IEEE Journals}
\maketitle

%

\begin{abstract}
\noindent In this paper we explore partial coherence as a tool for evaluating  causal influence of one signal sequence on another. In some cases the  signal sequence is sampled from a  time- or space-series. The key idea is to establish a connection between  questions of causality and questions of partial coherence. Once this connection is established, then a scale-invariant partial coherence  statistic is used to resolve the question of causality. This coherence statistic is shown to be a likelihood ratio, and its null distribution is shown to be a Wilks Lambda. It may be computed from a composite covariance matrix or from its inverse, the information matrix. Numerical experiments demonstrate the application of partial coherence to the resolution of  causality. Importantly, the method is model-free, depending on no generative model for causality.\\

\end{abstract}

\begin{IEEEkeywords}  signal, time series, causality,  partial coherence,  information 
matrix,  likelihood ratio test, Wilks Lambda, ROC curve
\end{IEEEkeywords}

\IEEEpeerreviewmaketitle

\section{Introduction}
Questions of causality are fraught with ambiguities. In fact, we might paraphrase Norbert Wiener's discussion of Newtonian physics to say  causality is a concept grounded in a physical picture that can never be fully justified or fully rejected experimentally \cite{Wiener}. For this reason it is important to state at the outset that any  conclusions are conditioned on available {\em prima facie evidence} and on the  {\em method of inference}. The emergence of new evidence may render conclusions moot or false. And, more sophisticated methods of inference may reveal dependencies not revealed by other methods. For this reason it is prudent to answer questions of causality in the negative, rather than the affirmative. In other words, statements like  $x$ {\em is not} a cause for $y$ are preferred over statements like $x$ {\em is}  a cause for $y$. Moreover, a conclusion based on data is perforce a conclusion based on an experiment and a statistic whose value is used to resolve the question of causality. This means the question can only be resolved with respect to prima facie evidence, the outcome of an experiment, and the value of a statistic. It follows that the question is only resolved at a prescribed significance level.  Change the significance level, and change the answer to the question of causality. It might be said that the  question of causality is practically a question of null hypothesis testing, where the null hypothesis states that one signal sequence has no causal influence on another. 
One may still speak of the power of a method to reject the null for a putative causal model, but this does not validate the  causal model, nor does it rule out the possibility that other methods and evidence would negate the finding of causality.

Granger causality \cite{Granger:1969, Granger:1980} has been widely used for addressing questions of causality. Loosely speaking, a time series $\{y_n\}$ is causally dependent on another time series $\{x_n\}$ if the past of $\{x_n\}$ contains unique information about $\{y_n\}$. This statement is made precise in terms of joint probability statements concerning the predictive value of a time series when composed with prima facie evidence.  Geweke \cite{Geweke82} has studied the linear dependence and feedback between two time series and {in \cite{Geweke84} extended his results to conditional linear dependence in the presence of a third time series. Solo \cite{Solo} analyses the Granger-Geweke causality measure for a variety of state-space models, and establishes the invariance of this measure to linear time-invariant filtering.  Siggiridou and Kugiumtzis \cite{Siggiridou:2016} used a conditional Granger causality index 
to quantify Granger causality in a vector autoregression. Kramer \cite{Kramer} and Amblard and Michel \cite{Amblard2011} define and analyze {\em directed information} in their study of Granger causality graphs. 
  Schamberg and Coleman \cite{Schamberg:Coleman:2020} recently proposed a novel sample path dependent measure of
causal influence between time series. 

Barnett et al.  \cite{Barnett1, Barnett2, Barnett3} have adapted Granger's point of view to the study of causality in state-space models. They derive an estimate of the  Granger-Geweke causality measure} for testing causality within state-space models and  demonstrate   improved statistical power and reduced bias compared to results obtained by fitting an autoregressive (AR) model to what should be an autoregressive-moving average (ARMA) model.  


In this paper, we are interested in assessing 
{\em causal  influence} between two  signal sequences, 
using partial correlation or {\em partial coherence}.  It seems to us that {\em causal influence} is more descriptive than {\em causality} for describing the aims of statistical methods that are applied to  signal sequences 
without any 
 pretense at identifying or verifying a physical mechanism. As a practical matter of testing, these  signal sequences 
are often finite-dimensional samples of two time series, 
represented as the random vectors $\x$ and $\y$. 
The partial correlation between  $\x$ and $\y$, given $\z$,  is the correlation between the residual for  $\x$ regressed on $\z$ and the residual for $\y$ regressed on $\z$, where $\z$ is a third set of  random variables.  A zero partial correlation indicates that the random vector $\y$ does not participate in a linear minimum mean-squared error estimator of $\x$ from $(\y,{\bf z})$. In the multivariate normal case, $\x$ is then conditionally independent of $\y$, given $\z$.  We shall  define the prima facie evidence $\z$  in such a way that  the question of causality can be resolved from an analysis of partial coherence. 

In  a comprehensive treatment of conditioning and partial correlation for vector-valued random variables, with no reference to the question of causality, we identify a quantity we call {\em partial coherence}. 
Partial coherence is a ratio of determinants that is proportional to the volume of a normalized  error concentration ellipse for linearly estimating a pair of random variables from a third, or equivalently  for linearly estimating one random variable from two others. It is determined by partial canonical correlations.   Partial coherence is invariant to block diagonal non-singular transformations. 
In the case of wide-sense stationary time series, the spectral representation of partial coherence shows it to be a Riemann sum of narrowband coherences or information rates, in close analogy with Geweke's spectral representations. With appropriate choice of conditioning,   partial coherence is a monotone function of  the Granger-Geweke causality measure,  which is the same as 
Geweke's transfer entropy. 

We then turn to the question of causality, following the language and philosophy of Granger, and studying problems of the same general class as the Barnetts and Seth.  However, in our study of causality between  signal sequences. 
we adopt a conservative, model-free, characterization of the signals 
to be analyzed, based only on second-order moments. That is, in contrast to the work of the Barnetts and Seth, where parametric state space models are assumed for the time series under study, we assume only that samples of a time series are second-order random variables.  We show that the sampled-data estimate of partial coherence is in fact the model-free {\em ordinary likelihood ratio} for testing the null hypothesis that one signal sequence 
is not correlated with another. 
The  statistic 
of \cite{Barnett1} is an approximate likelihood ratio, based on an approximate ML identification of an underlying state space model for the  signals. 
That is, our estimate is ordinary likelihood in a model-free analysis of causality, and theirs is  approximate likelihood in a state-space analysis of causality. Importantly, under the null hypothesis, ordinary likelihood, or estimated partial coherence, is shown to be distributed as a Wilks Lambda statistic, allowing for the setting of thresholds to control significance level in hypothesis testing.

In the examples we analyze, the signal sequences 
of interest are finite samples from wide-sense stationary time series, but this is not essential to our methods. In fact the methodology we advocate extends  to the study of space-time series, as in \cite{Ramirez2010}, where the methods of this paper may be used to test for particularly influential time series in a spatially-distributed set of time series. Even more generally, partial coherence may be used to characterize causality between quite arbitrary measurement vectors, as in the labeling of graph edges by Whittaker~ \cite{Whittaker:2009}.

When parametric models are accurate, and the ratio of sample count to parameter count is large, then one would expect parametric methods to outperform model-free methods. But when the parametric model is inaccurate, model bias is introduced. 
In our re-running of the Barnetts and Seth example, we find that the statistical power we achieve is generally comparable to theirs, without using a parametric state-space model. We use a receiver operating characteristic (ROC) 
 to emphasize the point that resolution of causality is always a problem in hypothesis testing, where the power to identify (provisional) causality is dependent upon the significance level of the test, or equivalently upon the  probability of incorrectly rejecting the hypothesis of  non-causality.


\textbf{Notations:} Bold upper
case letters denote matrices, boldface lower case letters denote column vectors, and italics denote scalars. The superscript $(\cdot)^T$ 
denotes transpose. 
${\bf I}_M$ is the identity matrix of dimensions $M \times M$, and ${\bf 0}$ denotes either a zero column vector  or the zero matrix of appropriate dimensions. 
 We use ${\bf A}^{1/2}$ (${\bf A}^{-1/2}$) to denote the square root matrix of the Hermitian matrix ${\bf A}$ (${\bf A}^{-1}$); $ {\rm blkdiag}[{\bf A}_1, \ldots, {\bf A}_k ]$ is a block-diagonal matrix whose diagonal blocks are ${\bf A}_1, \ldots, {\bf A}_k$.  The notation ${\bf x} \sim \mathcal{N}_M({\bf 0}, {\bf R})$ indicates that ${\bf x}$ is an $M$-dimensional 
Gaussian random vector of mean ${\bf 0}$ and covariance ${\bf R}$.  We say it is multivariate normal (MVN).  

The rest of the paper is organized as 
 follows. In Section ~\ref{sec:PredCausal} we review a few motivating notions of predictive causal influence. In Section~\ref{sec:pc}, we introduce partial coherence and demonstrate its usefulness in the estimation of one random vector from two others. We review how partial coherence may be extracted from the information matrix, and establish spectral formulas which bring insight into the way  partial canonical correlations additively decompose partial coherence. 
We establish,  in the case of multivariate normal experiments, the connection  between partial coherence, conditional Kullback-Leibler divergence, conditional mutual information, conditional  information rate, and a Granger-Geweke causality measure. 
In Section~\ref{sec:test} we 
show that partial coherence is  a scale-invariant likelihood ratio statistic in the case of multivariate normal experiments. Importantly, we establish that the null distribution of the  coherence statistic is the distribution of a Wilks Lambda statistic, and show that its stochastic representation is that of a product of independent beta-distributed random variables. 
In Section~\ref{sec:exp}, we conduct two experiments to explore the past-causal, future-causal, and mixed-causal relationship between time series.  The first of these experiments demonstrates the use of {\em pairwise} partial coherence as a map of causality 
as a function of two time variables. The second reproduces the experiment of Barnett and Seth \cite{Barnett1} in order to compare our model-free  estimator of causality to their model-based estimator. We compare the power of our  estimator  of partial coherence  to the power of their estimator, and demonstrate the use of the Receiver Operating Characteristic as a way to view the trade-off between power (probability of detection, or one minus type II error probability) versus size (significance level, probability of false alarm, or probability of type I error).
 In Section~\ref{sec:discuss} we offer concluding remarks on the efficacy of model-based versus model-free approaches to null hypothesis testing for  causal influence. 

%

\section{Notions of Predictive Causal Influence}~\label{sec:PredCausal}
Begin with two scalar-valued time series, whose finite 
pasts may be denoted $x^t=(x_0, x_1, \ldots, x_t)$ and $y^t=(y_0, y_1, \ldots, y_t)$. Assume the existence of a joint probability density function (pdf) $p(x^t, y^t)$, for all $t$, with corresponding marginal pdfs for any subset of these random variables. A comparison between the conditional pdfs $p(y_{t+1}|x^t, y^t)$ and $p(y_{t+1}|y^t)$ 
may be used to test the  null hypothesis ${\makebox H}_0$ that $x^t$ has no causal influence on $y_{t+1}$. Such a comparison might be based on the error variance $V_{y_{t+1}|x^t,y^t}$ of a  predictor of $y_{t+1}$ from $(x^t, y^t)$ versus the error variance $V_{y_{t+1}|y^t}$ of a predictor of $y_{t+1}$ from $y^t$ only. There are other possibilities. Conditioned on $y^t$, are $y_{t+1}$ and $x_s$ correlated? When this question is asked for $s=t\pm r$, then hypotheses  of past-causal, future-causal,  mixed-causal influence may be answered.  These questions are easily generalized to  multivariate time series by comparing the {\em error covariance} for predictors of 
a random vector $\y_{t+1}$ from random vectors $(\x^t,\y^t)$. 

When prediction error variances are replaced by  entropies, 
then entropy differences may be used to measure mutual information. Define the following entropies $H$ and mutual information $I$: 
\begin{align}
	&H_{y_{t+s}}= {\mathbb E}[ -\log p(y_{t+s})], \,\makebox{entropy of the random variable }  y_{t+s},\nonumber \\
	&H_{y_{t+s}|y^t}={\mathbb E}[ -\log p(y_{t+s}|y^t)],\nonumber \\
&	 H_{y_{t+s}|x^t,y^t}={\mathbb E}[ -\log p(y_{t+s}|x^t, y^t],\nonumber \\
&	I_{y_{t+s},x^t| y^t}=H_{y_{t+s}|y^t}-H_{y_{t+s}|x^t,y^t}. \nonumber
\end{align}
In these expressions, the random variable $p(y_{t+s})$ is a pdf random variable, and ${\mathbb E}$ is expectation. The entropy  $H_{y_{t+s}}$ is a measure of uncertainty for the random variable. 
The other entropies are entropies conditioned on the past $y^t$ or the two pasts $(x^t,y^t)$. The mutual information $I_{y_{t+s},x^t| y^t}$ measures mutual information between $y_{t+s}$ and $x^t$, after conditioning on the past $y^t$. A special case is $I_{y_{t+s},x_s| y^t}$, which measures the mutual information between  $y_{t+s}$ and $x_s$, after conditioning on the past $y^t$. 
By construction the conditional entropies and the conditional mutual information  are  measures of {\em directed information}. For ergodic time series, 
they may be averaged over $t$ to determine the minimum codelength for encoding the time series $(y_0, y_1, \ldots)$, with or without the side information $x^t$. 
 Rissanen and Wax \cite{Rissanen:Wax:1987} are 
as cautionary as Wiener in their use of the term causality when they say, ``We measure causality by predictability, with predictability interpreted in terms of the codelength required to encode the time series with predictive coding. ... 
Obviously, the value of the resulting measure depends critically on the class of predictive densities selected. A good selection gives a sharp measure of causal dependence while a bad one masks a possible causal dependence, which of course is just as it should be."

In the framing of questions about past-causal, future-causal, or mixed causal-influence, three random vectors $(\x,\y,\z)$ may be 
extracted from two  signal sequences. 
For example, $\x=x^t, \y=y_{t+1}, \z=y^t$. Or $\x=x_s, \y=y_{t+1}, \z=y^t$.  More generally, $\x$ is a $p$-sample 
from time series $\{x_n\}$, $\y$ is a $q$-sample 
from the time series $\{y_n\}$, and $\z$ is an $r$-sample 
from either $\{x_n\}$ or $\{y_n\}$. That is, three finite-dimensional vectors are extracted from two signal sequences. The construction determines the nature of the question of causal influence. This suggests a general framework for analyzing causal influence based on three random vectors $\x\in {\mathbb R}^p, \y\in {\mathbb R}^q, \z\in {\mathbb R}^r$. Then partial coherence may be used as a tool for testing hypotheses of causal influence.  It encodes for conditional prediction error covariance. For zero-mean Gaussian time series, in which case all finite-dimensional random vectors are multivariate normal (MVN),  partial coherence encodes for 
conditional entropies, and mutual information. 
These points will be further clarified in due course. 

We do not impose a finite-dimensional time series model, such as AR, ARX, ARMA, or ARMAX  for signal sequences. Rather we assume only that the signal sequences 
are finite-variance with arbitrary non-negative definite covariances for every finite windows-worth of the sequences. 
These are then estimated from multiple realizations of an experiment, or in the case of ergodic times series, from consecutive windows of the signal sequences. 
There is no parameter estimation for coefficients in a time series model that is based on a finite-dimensional difference equation for covariances. 

\section{ Predictive Causality, Linear Regression, and Partial Coherence 
}\label{sec:pc} 
 There are two ways to frame questions of causal influence: 
as  a question of estimating two random vectors from one other, which is to say regressing two random vectors onto one random vector; or as a question of estimating one random vector from two others, which is to say regressing one random vector onto two random vectors.  In the first case 
 the random vectors $(\x,\y)$ are to be regressed onto the random vector $\z$; in the second case,  
the random vector ${\bf x}\in {\mathbb R}^p$ is to be regressed onto the random vectors ${\bf y}\in {\mathbb R}^q$ and ${\bf z}\in {\mathbb R}^r$. The composite covariance matrix for these three random vectors is
\begin{equation}\label{comp}
\R= {\mathbb E}[\left[ \begin{array} {ccc} {\bf x}\\ {\bf y}\\ {\bf z} \end{array} \right]\left [\begin{array}{ccc}{\bf x}^T & {\bf y}^T & {\bf z}^T \end{array} \right]]=\left[ \begin{array} {ccccccccc} {\bf \R_{\x\x}}&\R_{\x\y} & \R_{\x\z}\\ {\bf \R_{\y\x}}&\R_{\y\y} & \R_{\y\z}\\{\bf \R_{\z\x}}&\R_{\z\y} & \R_{\z\z}\end{array} \right].
\end{equation}
By defining the composite vectors $\u=(\x^T, \y^T)^T$, and $\v=(\y^T, \z^T )^T$ the covariance matrix $\R$ may be parsed two ways: 
\begin{equation}
\R=\left[ \begin{array}{cccc} \R_{\u\u} & \R_{\u\z}\\ \R_{\u\z}^T & \R_{\z\z}\end{array} \right]=\left[ \begin{array}{cccc} \R_{\x\x} & \R_{\x\v}\\ \R_{\x\v}^T & \R_{\v\v}\end{array} \right].
\end{equation}
 The covariance matrix $\R_{\u\u}$ is $(p+q)\times (p+q)$, and the covariance matrix $\R_{\x\x}$ is $p\times p$. 
There are two useful representations for the inverse of the composite covariance matrix $\R$:
\begin{align}
\R^{-1}=&\left[ \begin{array}{cccc} \R_{\u\u|\z}^{-1} & \R_{\u\u|\z}^{-1}\R_{\u\z}\\ \R_{\u\z}^T\R_{\u\u|\z}^{-1} & \R_{\z\z}^{-1}+\R_{\z\z}^{-1}\R_{\u\z}^T\R_{\u\u|\z}^{-1}\R_{\u\z}\R_{\z\z}^{-1} \end{array} \right] \nonumber \\
=&\left[ \begin{array}{cccc} \R_{\x\x|\v}^{-1} & \R_{\x\x|\v}^{-1}\R_{\x\v}\\ \R_{\x\v}^T\R_{\x\x|\v}^{-1} & \R_{\v\v}^{-1}+\R_{\v\v}^{-1}\R_{\x\v}^T\R_{\x\x|\v}^{-1}\R_{\x\v}\R_{\v\v}^{-1} \end{array} \right].
 \label{invR} 
\end{align}
The matrix $\R_{\u\u|\z}$ is the error covariance matrix for estimating the composite vector $\u$ 
from $\z$, and the matrix $\R_{\x\x|\v}$ is the error covariance matrix for estimating $\x$ from $\v$: 
\begin{align}
\R_{\u\u|\z}=\R_{\u\u}-\R_{\u\z}\R_{\z\z}^{-1}\R_{\u\z}^T, \\
\R_{\x\x|\v}=\R_{\x\x}-\R_{\x\v}\R_{\v\v}^{-1}\R_{\x\v}^T.
\end{align}
We shall have more to say about these error covariance matrices in due course. Importantly, the inverses of each may be read out of the inverse for the composite covariance matrix $\R^{-1}$ of eqn(\ref{invR}). The dimension of the error covariance $\R_{\u\u|\z}$ is $(p+q)\times (p+q)$ and the dimension of the error covariance $\R_{\x\x|\v}$ is $p\times p$.

\subsection{ Regressing two random vectors onto one: partial correlation and partial canonical correlation coefficients
}

The estimators of $\x$ and $\y$ from $\z$, and their resulting error covariance matrices are easily read out from the composite covariance matrix $\R$:
\begin{align}\label{filt}
\hat{\x}(\z)=&\R_{\x\z}\R_{\z\z}^{-1} \z, \\
\R_{\x\x|\z}=&{\mathbb E}[(\x-\hat{\x})(\x-\hat{\x})^T]=\R_{\x\x}-\R_{\x\z}\R_{\z\z}^{-1}\R_{\x\z}^T,\\
\hat{\y}(\z)=&\R_{\y\z}\R_{\z\z}^{-1} \z, \\
\R_{\y\y|\z}=&{\mathbb E}[(\y-\hat{\y})(\y-\hat{\y})^T]=\R_{\y\y}-\R_{\y\z}\R_{\z\z}^{-1}\R_{\y\z}^T .
\end{align}
The composite error covariance matrix for the errors $\x-\hat{\x}(\z)$ and $\y-\hat{\y}(\z)$ is the matrix 
\begin{align}\label{error}
\R_{\u\u|\z}=&{\mathbb E}\left[\begin{array}{cc} \x-\hat{\x}(\z)\\ \y-\hat{\y}(\z) \end{array}\right]\left[\begin{array}{cc} (\x-\hat{\x}(\z))^T& (\y-\hat{\y}(\z))^T\end{array} \right] \nonumber \\
=&\left[\begin{array}{cccc} \R_{\x\x|\z} &\R_{\x\y|\z}\\ \R_{\x\y|\z}^T & \R_{\y\y|\z} \end{array}\right], \\
\R_{\x\y|\z}=&{\mathbb E} [(\x-\hat{\x}(\z))(\y-\hat{\y}(\z))^T] \nonumber\\
=&\R_{\x\y}-\R_{\x\z}\R_{\z\z}^{-1}\R_{\y\z}^T.
\end{align} 
 In this matrix the matrix block $\R_{\x\y|\z}$ 
is the $p\times q$ matrix of cross correlations between the random errors $(\x-\hat{\x}(\z))$ and $(\y-\hat{\y}(\z))$. It is called the {\em partial correlation between the random vectors $\x$ and $\y$, after each has been regressed onto the common vector $\z$.}  Under the null hypothesis $\makebox{H}_0$,  this partial correlation is zero, and therefore  $\R_{\u\u|\z}$ is the block-diagonal matrix $\text{blkdiag}[\R_{\x\x|\z}, \R_{\y\y|\z}]$.


	Formula \eqref{invR} 
	shows that $\R_{\u\u|\z}$ and its determinant may be read directly out of the $(p+q)\times (p+q)$ Northwest block of the inverse of the error covariance matrix:
	\begin{align}
		(\R^{-1})_{NW}&=\R_{\u\u|\z}^{-1} \hspace{.1in} \makebox{and} \nonumber\\ \det[\R_{\u\u|\z}]&=\frac{1}{\det[(\R^{-1})_{NW}]}.
	\end{align} 
	This result was known to Harold Cramer more than 70 years ago \cite{Cramer:1946}, and is featured prominently in the book on graphical models by Whittaker \cite{Whittaker:2009}.  The consequence is that the ingredients of a partial coherence statistic we shall compute from experimental data may be read out of the  Northwest blocks of inverse sample covariance matrices.

The error covariance matrix $\R_{\u\u|\z}$ may be pre- and post-multiplied by 
 $\text{blkdiag}[\R_{\x\x|\z}^{-1/2}, \R_{\y\y|\z}^{-1/2}]$  to produce the normalized error covariance matrix  and its corresponding determinant: 
\begin{align}\label{norm}
\Q_{\u\u|\z}
=&\left[\begin{array}{cccc} \I&\R_{\x\x\mid\z}^{-1/2}\R_{\x\y\mid\z}\R_{\y\y\mid\z}^{-1/2}\\\R_{\x\x\mid\z}^{-1/2} \R_{\x\y\mid\z}^T\R_{\y\y\mid\z}^{-1/2}&\I \end{array}\right],\\
\det[\Q_{\u\u|\z}]=&\frac{\det[{\R_{\u\u|\z}}]}{\det[\R_{\x\x|\z}]\det[\R_{\y\y|\z}]}=\det[\I-\C_{\x\y|\z}\C^T_{\x\y|\z}],\\
\C_{\x\y|\z}=&\R_{\x\x\mid\z}^{-1/2}\R_{\x\y\mid\z}\R_{\y\y\mid\z}^{-1/2}.
\end{align}
 The NE matrix block of the covariance matrix ${\Q_{\u\u|\z}}$ is the {\em partial coherence matrix} 
$\C_{\x\y|\z}$. 
The determinant of $\Q_{\u\u|\z}$ is  proportional to the volume of the normalized error concentration ellipse for estimating $\u$ from $\z$. When specialized for causal influence, this gives an  interpretation of the Granger-Geweke causality measure that is consistent with, but slightly different than,  the discussion given in \cite{Barnett1}. 
 Under the null hypothesis  $\makebox{H}_0$, the normalized error covariance matrix is identity, $\Q_{\u\u|\z}=\I_{p+q}$,  its determinant is one, and the partial coherence matrix is the zero matrix, $\C_{\x\y|\z}=\0_{p\times q}$.

We define  {\em partial coherence} to be 
\begin{align}
 \rho^2_{\x\y|\z}=&1-\det[{\Q_{\u\u|\z}}] 
=1-\frac{\det[\R_{\u\u|\z}]}{\det[\R_{\x\x|\z}]\det[\R_{\y\y|\z}]}\nonumber \\
=&1-\det[\I-\C_{\x\y|\z}\C^T_{\x\y|\z}].
\end{align}
Define the SVD of the partial coherence matrix to be ${\bf C}_{\x\y\mid\z}=\F\K\G^T$,
where $\F$ is a $p\times p$ orthogonal matrix, $\G$ is a $q\times q$ orthogonal matrix, and $\K$ is a $p\times q$ diagonal matrix of partial canonical correlations. The matrix $\K$ may be called the {\em partial canonical correlation matrix}.  The normalized error covariance matrix of eqn \eqref{norm} may be written as
\begin{align}
\Q_{\u\u|\z}=\left[ \begin{array}{cccc}\F & {\bf 0}\\{\bf 0} & \G  \end{array} \right]\left[ \begin{array}{cccc}\I & \K\\\K^T & \I  \end{array} \right]\left[ \begin{array}{cccc}\F^T& {\bf 0}\\{\bf 0} & \G^T  \end{array} \right].
\end{align}
As a consequence, partial coherence may be factored as 
\begin{align}
 \rho^2_{\x\y|\z}
=1-\makebox{det}[\I-\K\K^T]=1-\prod_{i=1}^{\makebox{min}(p,q)} (1-k_i^2). \label{cc}
\end{align}
The  partial canonical correlations  
$k_i$ are bounded between $0$ and $1$, as is partial coherence. 
When the squared partial canonical correlations $k_i^2$ are near to zero, then partial coherence $\rho^2_{\x\y|\z}$ is near to  zero, indicating linear independence 
of $\x$ and $\y$, given $\z$. 

These results  summarize the error analysis for  linearly estimating the random vectors $\x$ and $\y$ from a common random vector~$\z$.  The only assumption is that the random vectors $(\x, \y, \z)$ are second-order random vectors. 
By constructing the random vectors $\x,\y,\z$ appropriately, we shall answer questions of second-order causal influence.

\subsection{Regressing one random vector onto two: partial correlation and partial canonical correlation coefficients}
Suppose now that the random vector $\x$ is to be linearly regressed onto  $\v=(\y, \z)$: 
\begin{equation}
\hat{\x}( \v)=\left[ \begin{array} {cc} \R_{\x\y}&\R_{\x\z} \end{array} \right] \left[ \begin{array} {cccc} \R_{\y\y} & \R_{\y\z}\\\R_{\y\z} ^T& \R_{\z\z}\end{array} \right]^{-1}\left[ \begin{array} {cc} {\bf y}\\ {\bf z} \end{array} \right].
\end{equation}
Give the matrix inverse in this equation, the following block diagonal LDU factorization:
\begin{align}
&\left[ \begin{array} {cccc}  \R_{\y\y}&\R_{\y\z} \\ \R_{\y\z}^T & \R_{\z\z} \end{array} \right]^{-1} =\nonumber\\
&\left[ \begin{array} {cccc} \I &{\bf 0} \\ -\R_{\z\z}^{-1}\R_{\y\z}^T& \I \end{array} \right]\left[ \begin{array} {cccc} \R_{\y\y\mid\z}^{-1}&{\bf 0} \\  {\bf 0}& \R_{\z\z}^{-1} \end{array} \right]\left[ \begin{array} {cccc} \I &-\R_{\y\z}\R_{\z\z}^{-1} \\ {\bf 0}& \I \end{array} \right].
\end{align}
A few lines of algebra produce this result for $\hat{{\bf x}}( \v)$, the linear minimum mean-squared error estimator of $\x$ from $\v$:
\begin{align}
\hat{\x}( \v)=\hat{{\bf x}}({\bf z})+\R_{\x\y\mid\z} \R_{\y\y\mid\z}^{-1}\left[{\bf y}-\hat{{\y}}({\bf z})\right].
\end{align}
It is evident that the vector $\y$ is not used in a linear minimum mean-squared error estimator of $\x$ when the partial covariance $\R_{\x\y|\z}$  is zero. That is, the random vector $\y$ brings no useful second-order information to the problem of linearly estimating $\x$. 

The error covariance matrix for estimating ${\bf x}$ from $ \v$ is easily shown to be 
\begin{align}
\R_{\x\x\mid\v}={\mathbb E}[(\x-\hat{\x}(\v)(\x-\hat{\x}(\v))^T  ] \nonumber\\
=\R_{\x\x\mid\z}-\R_{\x\y\mid\z} \R_{\y\y\mid\z}^{-1}  \R_{\x\y\mid\z}^T.
\end{align}
Thus the error covariance ${\R_{\x\x\mid\z}}$  is  reduced by a quadratic form depending on the  covariance between the errors  $({\bf x}-\hat{{\bf x}}({\bf z}))$ and  $({\bf y}-\hat{{\bf y}}({\bf z}))$. 
If this error covariance is now normalized by the error covariance matrix achieved by regressing only on the vector $\z$, the result is 
\begin{align}
{\P_{\x\x|\v}}=&\R_{\x\x|\z}^{-1/2}\R_{\x\x|\v}\R_{\x\x|\z}^{-1/2}\nonumber\\=&\I-\R_{\x\x|\z}^{-1/2}\R_{\x\y\mid\z} \R_{\y\y\mid\z}^{-1} \R_{\x\y\mid\z}^T\R_{\x\x|\z}^{-1/2} \nonumber\\
=&\I-\C_{\x\y|\z}\C_{\x\y|\z}^T=\F(\I-\K\K^T)\F^T. 
\end{align}
The determinant of this matrix measures the volume of the normalized error covariance matrix:
\begin{align}
\makebox{det}[\P_{\x\x|\v}]&=\frac{\makebox{det}[\R_{\x\x|\v}]}{\makebox{det}[\R_{\x\x|\z}]
}\nonumber\\&=\makebox{det}[\I-\K\K^T]=\prod_{i=1}^{\makebox{min}(p,q)} (1-k_i^2) .
\end{align}
As before, we may define a {\em partial coherence} 
\begin{align}
 \rho^2_{\x|\y\z}=1-\det[\P_{\x\x|\v}]=1- \prod_{i=1}^{\makebox{min}(p,q)} (1-k_i^2). 
\end{align}
When the squared partial canonical correlations $k_i^2$ are near to zero, then partial coherence 
$\rho^2_{\x|\y\z}$ is near to  zero, indicating  linear independence 
of $\x$ on $\y$, given $\z$. Consequently  the estimator $\hat{\x}(\v)$ depends only on $\z$, and not on $\y$. 

These results summarize the error analysis for estimating the random vector $\x$ from the composite vector $\v$.    It is notable that, except for scaling constants dependent only upon the dimensions $p,q,r$, 
the volume of the normalized error covariance matrix for estimating $\x$ from $\v$ equals the volume of the normalized error covariance matrix for estimating $\u$ from $\z$. 
Both of these volumes are determined by the 
partial canonical correlations $k_i$.   Importantly, for answering questions of causal influence, 
based on linear prediction, it makes no difference whether one considers $\rho^2_{\x\y|\z}$ or $\rho^2_{\x|\y\z}$.  These two measures of coherence are identical. 
So, for example, to analyze $\rho^2_{x_sy_t|\z}$ is to analyze $\rho^2_{x_s|y_t\z}$, for $\x$ defined to be $x_s$ and $\y$ defined to be $y_t$. In  Section~III.D 
we show that in the MVN case, the partial coherence between $\x$ and $\y$, given $\z$ is a monotone function of conditional KL divergence, conditional mutual information,  conditional information rate, and a 
Granger-Geweke causality measure. 

Importantly the 
partial canonical correlations $k_i$ are invariant to  transformation of the random vector $(\x^T, \y^T, \z^T)^T$ by a  block diagonal, non-singular, matrix $\T=\makebox{blkdiag}[\T_{\x\x}, \T_{\y\y}, \T_{\z\z}]$. As a consequence partial coherence is invariant to transformation $\T$. This result is the finite-dimensional version of Solo's finding \cite{Solo}  
that the Granger-Geweke causality measure is invariant to uncoupled linear filtering of wide-sense stationary time series. A slight variation on Proposition 10.6 in \cite{Eaton} shows  partial canonical correlations  to be maximal invariants under group action $\T$.




\subsection{Spectral formulas}

There are spectral formulas. Begin with wide-sense stationary scalar time series $\{x_n\},\{y_n\},  \{z_n\}$, from which  error time series are computed. Assume the error covariance matrices $\R_{\x\x|\z}$, $\R_{\y\y|\z}$, and  $\R_{\x\y|\z}$ are Toeplitz matrices constructed from the scalar time series of error variances, $\{R_{xx|z}[n]\}$, $\{R_{yy|z}[n]\}$, and $\{R_{xy|z}[n]\}$.   Then the formulas of Szego \cite{Szego1915}, \cite{Bottcher1990} may be applied to show that in the limit $p,q\longrightarrow \infty$,
\begin{align}
\rho^2_{xy|z}
&\longrightarrow 
1- \exp\{\int_{0}^{2\pi}\log\left(1-\hat{k}^2(e^{j\theta})\right)\frac{d\theta}{2\pi}\},\\
\hat{k}^2(e^{j\theta})=&\frac{|\hat{R}_{xy|z}(e^{j\theta})|^2}{\hat{R}_{xx|z}(e^{j\theta})\hat{R}_{yy|z}(e^{j\theta})}.
\end{align}
where 
$\hat{R}_{xx|z}(e^{j\theta})$ is the  discrete-time  Fourier transform (DTFT) of  $\{R_{xx|z}[n]\}$, $\hat{R}_{yy|z}(e^{j\theta})$ is the DTFT of $\{R_{yy|z}[n]\}$, and $\hat{R}_{xy|z}(e^{j\theta})$ is the DTFT of  $\{R_{xy|z}[n]\}$. This formula generalizes to vector-valued time series \cite{Ramirez2010}. 
We may say the narrowband partial canonical correlations $\hat{k}(e^{j\theta})$ resolve the broadband partial coherence $\rho^2_{xy|z}$, in direct correspondence to eqn (\ref{cc}). Similar formulas may be found in \cite{Scharf2000}. 
The partial canonical 
correlation spectrum $\{\hat{k}(e^{j\theta}), 0<\theta\le 2\pi\}$ is invariant to time-invariant  linear filtering of the wide-sense stationary time series $\{x_n\},\{y_n\},  \{z_n\}$. 





\subsection{ Kullback-Leibler Divergence, Conditional Mutual Information,  Conditional Information Rate,  Partial Coherence,
and Granger-Geweke Causality Measure}\label{sec: KLdiv} 


Begin with the random vectors $(\x,\y,\z)$. The Kullback-Leibler divergence between two probability laws $P_1$ and $P_0$ for these is defined to be
\begin{equation} 
D_{KL}(P_1||P_0)=
{\mathbb E}_{P_1}\log\frac{p_1(\x,\y,\z)}{p_0(\x,\y,\z)},
\end{equation}
where $p_i(\x,\y,\z)$ is the probabiity density function under probability law $P_i$. We may think of the random variables $p_1(\x,\y,\z)$ and $p_0(\x,\y,\z)$ as pdf random variables or belief state random variables. Their ratio $p_1(\x,\y,\z)/p_0(\x,\y,\z)$ is a likelihood ratio random variable. So KL divergence may be considered the expected value of the log-likelihood ratio for testing the model $P_1$ against the model $P_0$, under the hypothesis that measurements are drawn from the probability law $P_1$. 

When $(\x,\y,\z)$ is composed as $(\u,\z)$, then $p(\x,\y,\z)=p(\u|\z)p(\z)$ and the ratio is $p_1(\u|\z)/p_0(\u|\z)$, provided the marginal distribution of $\z$ is the same under probability laws $P_1$ and $P_0$. In the case where $(\u,\z)$ is multivariate normal (MVN) with mean ${\bf 0}$ and the covariance matrix  $\R$ of eqn (2), it is not hard to show that the Kullback-Leibler divergence is 
\begin{equation} 
D_{KL}(P_1||P_0)=-\frac{1}{2}\log\det[\Q_{\u\u|\z}] .
\end{equation}
That is, after regression of $\x$ and $\y$ onto 
$\z$, the problem is reset to the analysis of the errors  $(\x-\hat{\x}(\z))$ and $(\y-\hat{\y}(\z))$, and their partial coherence. 
In the MVN case, KL divergence for this problem is also the conditional mutual information,  $I_{\x\y|\z}={\mathbb E}\left[ \log\frac{p(\x,\y|\z)}{p(\x|\z)p(\y|\z)} \right]$ between the random vectors $\x$ and $\y$, given $\z$. 
Thus KL divergence is also Shannon's definition of the rate $R$ at which the MVN error $\y-\hat{\y}(\z)$ brings information about the MVN error $\x-\hat{\x}(\z)$. This is also the rate at which $\y$ brings information about $\x$, given $\z$, and a monotone function of what the Granger-Geweke causality measure would say for finite-dimensional random vectors $(\x,\y, \z)$. 
So partial coherence, conditional KL divergence, conditional mutual information, conditional  information rate, 
and a Granger-Geweke causality measure 
are  monotone functions of each other. All are resolved by the partial canonical correlations $k_i$:
\begin{align}
D_{KL}(P_1||P_0)=&I_{\x\y|\z}
=-\frac{1}{2}\log (1- \rho^2_{\x\y|\z})\\
=& -\frac{1}{2}\sum_{i=1}^{\makebox{min}(p,q)}\log(1-k_i^2) 
\end{align}
 These finite-dimensional connections are consistent with the finding in \cite{Barnett2} that the Granger-Geweke causality measure and transfer entropy are equivalent for MVN variables. 
These equivalent measures are invariant to block diagonal nonsingular transformations of the random vectors $\x, \y, \z$, and they are functions of the  partial canonical correlations $k_i$, which are maximal invariants.

When the partial canonical correlations $k_i$ are near to zero, indicating the coherence matrix $\C_{\x\y|\z}$ is near to the zero matrix, then partial coherence is near to  zero, 
and  conditional KL divergence, conditional mutual information,  and conditional rate are near to zero. The Granger-Geweke causality measure is near to one, and its transfer entropy is near to zero.\\



\section{ Partial Coherence, Likelihood, and a Test for Causality}\label{sec:test}

Partial coherence is inherently connected to likelihood in the case where the random vectors $(\x,\y,\z)$ are jointly normally distributed. The matrix $\R_{\u\u|\z}$ 
is the conditional error covariance matrix for the error $(\u-\hat{\u}(\z))$; when    the partial correlation $\R_{\x\y|\z}$ is zero, 
then $\R_{\u\u|\z}=\makebox{blkdiag}[\R_{\x\x|\z}, \R_{\y\y|\z}]$. 

The null hypothesis test of interest is $\makebox{H}_0: (\u-\hat{\u}(\z))\sim \mathcal{N}_{p+q}({\bf 0},\makebox{blkdiag}[\R_{\x\x|\z}, \R_{\y\y|\z}])$ versus the alternative
$\makebox{H}_1: (\u-\hat{\u}(\z))\sim \mathcal{N}_{p+q}({\bf 0},\R_{\u\u|\z})$. 


\subsection{Likelihood}
Given the data matrix $\D^T=(\X^T,\Y^T, \Z^T)=(\U^T,\Z^T)$, where each of the matrices $\X\in{\mathbb R}^{p\times M},\Y\in{\mathbb R}^{q\times M}, \Z\in{\mathbb R}^{r\times M}$ consists of $M$ i.i.d. realizations of the jointly multivariate normal random vectors $\x,\y, \z$, compute the sample covariance matrix $\S=\D\D^T$, structured as follows:
\begin{align}
\S&=\left[\begin{array}{cccc}\S_{\x\x}&\S_{\x\y}&\S_{x\z}\\ \S_{\y\x}&\S_{\y\y}&\S_{y\z}\\\S_{\z x}&\S_{\z y}&\S_{\z\z}\end{array} \right]=\left[\begin{array}{cccc} \S_{\u\u}&\S_{\u\z}\\ \S_{\z\u}&\S_{\z\z}\end{array} \right].
\end{align} 
For example, $\S_{\x\y}=\X\Y^T$. 

The ordinary likelihood for testing $\makebox{H}_0$ versus $\makebox{H}_1$ is then \cite{Mardia}
 \begin{equation}
1-\hat{\rho}^2_{\x\y|\z}
= \frac{\det[\S_{\u\u|\z}]}{\det[\S_{\x\x|\z}]\det[\S_{\y\y|\z}]},
\end{equation}
where 
\begin{align}
 \S_{\u\u|\z}=\S_{\u\u}-\S_{\u\z}\S_{\z\z}^{-1}\S_{\u\z}^T, \\
  \S_{\x\x|\z}=\S_{\x\x}-\S_{\x\z}\S_{\z\z}^{-1}\S_{\x\z}^T,\\
   \S_{\y\y|\z}=\S_{\y\y}-\S_{\y\z}\S_{\z\z}^{-1}\S_{\y\z}^T.
 \end{align}
Of course, as noted previously, the matrix $\S_{\u\u|\z}^{-1}$ may be read out of the inverse $\S^{-1}$ and used to compute the determinant $\det[\S_{\u\u|\z}]$. 

The following proposition establishes the distribution, and a stochastic representation, for the likelihood ratio statistic $\hat{\rho}^2_{\x\y|\z}$ under the null hypothesis. This result forms the basis for setting a detection threshold to achieve a desired type-I error. 

\begin{Prop}
Under the null hypothesis $\makebox{H}_0$ that the coherence matrix 
$\C_{\x\y|\z}$ is the zero matrix, the 
likelihood ratio statistic  $1- \widehat{\rho}^2_{\x\y|\z}$ has the Wilks $\Lambda$ distribution, $\Lambda(p, M-r-q, q)$ with stochastic representation 
$$
\widehat{\rho}^2_{\x\y|\z}\sim 1-\prod_{i=1}^p b_i,
$$
where the $b_i$ are independent Beta random variables with $\makebox{Beta}(\frac{M-r-q-i+1}{2}, \frac{q}{2})$ distributions.
\end{Prop}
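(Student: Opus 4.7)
The plan is to reduce the likelihood ratio $1-\widehat{\rho}^2_{\x\y|\z}$ to a ratio of determinants of two independent Wishart matrices, which by definition is Wilks $\Lambda$, and then invoke the classical Beta product representation of Wilks $\Lambda$. Under $\makebox{H}_0$ the data matrix has i.i.d.\ columns drawn from $\mathcal{N}_{p+q+r}(\mathbf{0},\R)$ with $\R_{\x\y|\z}=\mathbf{0}$, equivalently $\R_{\u\u|\z}=\makebox{blkdiag}[\R_{\x\x|\z},\R_{\y\y|\z}]$. So $\S=\D\D^T\sim W_{p+q+r}(M,\R)$, and the standard Wishart reduction theorem (reducing out the $\z$-block) gives
$$
\S_{\u\u|\z}\;\sim\; W_{p+q}\bigl(M-r,\ \makebox{blkdiag}[\R_{\x\x|\z},\R_{\y\y|\z}]\bigr),
$$
statistically independent of $(\S_{\u\z},\S_{\z\z})$.

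Next I would apply the Schur determinant identity to the $p$ versus $q$ partition of $\S_{\u\u|\z}$, namely $\det[\S_{\u\u|\z}]=\det[\S_{\x\x|\z}]\,\det[\S_{\y\y|\z,\x}]$ with $\S_{\y\y|\z,\x}:=\S_{\y\y|\z}-\S_{\y\x|\z}\S_{\x\x|\z}^{-1}\S_{\x\y|\z}$. This gives
$$
1-\widehat{\rho}^2_{\x\y|\z}\;=\;\frac{\det[\S_{\y\y|\z,\x}]}{\det[\S_{\y\y|\z}]}\;=\;\frac{\det[A]}{\det[A+B]},
$$
with $A:=\S_{\y\y|\z,\x}$ and $B:=\S_{\y\x|\z}\S_{\x\x|\z}^{-1}\S_{\x\y|\z}$. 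A second application of the Wishart reduction theorem to the block-diagonal Wishart $\S_{\u\u|\z}$ yields $A\sim W_q(M-r-p,\R_{\y\y|\z})$, independent of $(\S_{\x\x|\z},\S_{\y\x|\z})$. Conditionally on $\S_{\x\x|\z}$, the cross block $\S_{\y\x|\z}$ is matrix-normal with zero mean (this is where $\R_{\x\y|\z}=\mathbf{0}$ is essential), from which $B\sim W_q(p,\R_{\y\y|\z})$ independently of $A$. Thus $\det[A]/\det[A+B]$ is, by definition, the Wilks statistic $\Lambda(q,M-r-p,p)$; by the standard Wilks symmetry $\Lambda(p_1,n_1,n_2)\stackrel{d}{=}\Lambda(n_2,n_1+n_2-p_1,p_1)$ this law is identical to $\Lambda(p,M-r-q,q)$, matching the claimed parameters.

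Finally I would invoke the classical stochastic representation of Wilks $\Lambda(p,m,n)$ as $\prod_{i=1}^p b_i$ with the $b_i$ independent and $b_i\sim\makebox{Beta}\bigl((m-i+1)/2,\,n/2\bigr)$; substituting $m=M-r-q$, $n=q$ yields the claimed product of independent $\makebox{Beta}\bigl((M-r-q-i+1)/2,\,q/2\bigr)$ factors. The main obstacle is not the Schur-complement algebra but the careful bookkeeping of degrees of freedom together with the verification that $A$ and $B$ are \emph{independent} Wisharts under $\makebox{H}_0$; this independence is precisely what the block-diagonality of $\R_{\u\u|\z}$ buys, and it is the step that separates the null distribution from the (non-central) behavior under the alternative.
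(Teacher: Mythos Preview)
Your argument is correct and is essentially the mirror image of the paper's. Both proofs first reduce to $\S_{\u\u|\z}\sim W_{p+q}(M-r,\,\makebox{blkdiag}[\R_{\x\x|\z},\R_{\y\y|\z}])$ under $\makebox{H}_0$, then split the likelihood ratio into a quotient built from two independent Wishart pieces. The paper does this geometrically: it writes $\S_{\u\u|\z}=\U\U^T$ with normally distributed residual matrices $\U_1,\U_2$, projects $\U_1$ onto and orthogonal to the row space of $\U_2$, and obtains $p\times p$ matrices $\W_1\W_1^T$ and $\W_2\W_2^T$ whose determinant ratio is $\Lambda(p,M-r-q,q)$ directly. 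You instead Schur out the $\x$-block using Wishart conditional-distribution machinery, landing on $q\times q$ matrices $A,B$ and $\Lambda(q,M-r-p,p)$, and then invoke the Wilks symmetry $\Lambda(q,M-r-p,p)\stackrel{d}{=}\Lambda(p,M-r-q,q)$ to match the stated parameters. Had you taken the Schur complement in the other direction, writing $1-\widehat{\rho}^2_{\x\y|\z}=\det[\S_{\x\x|\z,\y}]/\det[\S_{\x\x|\z}]$, you would have arrived at $\Lambda(p,M-r-q,q)$ without the symmetry detour; that is exactly what the paper's projection onto $\langle\U_2\rangle^\perp$ accomplishes. Either route is sound, and your distributional bookkeeping for the independence of $A$ and $B$ is the same content as the paper's observation that $\W_1$ and $\W_2$ have conditional laws that are free of $\V$.
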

 
\begin{proof} 
 Let $\U_1=\X-\X\Z^T(\Z\Z^T)^{-1}\Z$, $\U_2=\Y-\Y\Z^T(\Z\Z^T)^{-1}\Z$, and $\U=[\U_1^T, \U_2^T]^T$. It follows that 
$\U\U^T= \S_{\u\u|\z}$, $\U_1\U_1^T= \S_{\x\x|\z}$, 
$\U_2\U_2^T= \S_{\y\y|\z}$,  and therefore,  
 \begin{align}
 1-\widehat{\rho}^2_{\x\y|\z}=\frac{\det(\U\U^T)}{\det(\U_1\U_1^T)\det(\U_2\U_2^T)}.
 \end{align}
 Kshirsagar \cite{Kshirsagar} has shown that the error covariance $\U\U^T$ has a Wishart distribution $W_{p+q}(\R_{\u\u|\z}, M-r)$ where $r$ is the dimension of $\z$. The statistic $\frac{\det(\U\U^T)}{\det(\U_1\U_1^T)\det(\U_2\U_2^T)}$ is 
invariant to  left multiplication of $(\X,\Y)$ by a block-diagonal matrix $\left[\begin{array}{ccc}\T_1&0\\ 0&\T_2\end{array}\right]$ for non-singular 
$p\times p$  matrix $\T_1$ and $q\times q$ matrix $\T_2$. Choose $\T_1=\R_{\x\x|\z}^{-1/2}$ and $\T_2=\R_{\y\y|\z}^{-1/2}$, so that WLOG $\U\sim \mathcal{N}({\bm 0}, \I_{p+q}\otimes \I_{M-r} )$ which yields $\U_1\sim \mathcal{N}({\bm 0}, \I_{p}\otimes \I_{M-r} )$ and $\U_2\sim \mathcal{N}({\bm 0}, \I_{q}\otimes \I_{M-r} )$. Moreover,  $\U_1$ and $\U_2$ are independent under the null hypothesis

Let $[\V, \V^\perp]\in {\mathbb R}^{(M-r)\times (M-r)}$ be the orthogonal matrix where $\V\in  {\mathbb R}^{(M-r)\times q}$ is a unitary basis for the row space of $\U_2$ and $\V^\perp\in {\mathbb R}^{(M-r)\times(M-r-q)}$ is the unitary basis for its orthogonal space.  We can choose  $\V = \U_2^T(\U_2\U_2^T)^{-1}\U_2$ and define $\W_1=\U_1\V$ and $\W_2=\U_1\V^\perp$. Conditional on $\V$ and $\V^\perp$, it can be shown that $\W_1\sim \mathcal{N}(0, \I_p \otimes \I_q)$, and $\W_2\sim \mathcal{N}(0, \I_p \otimes \I_{M-r-q})$ (assuming $M-r>p+q$) which are invariant to $\V$.
Simple linear algebra leads to 
 \begin{align*}
1-\widehat{\rho}^2_{\x\y|\z}=\frac{\det(\W_2\W_2^T)}{\det(\W_1\W_1^T+\W_2\W_2^T)}.
\end{align*}
This has the Wilk's $\Lambda$ distribution $\Lambda(p, M-r-q, q)$ \cite{Wilks}, which is identical to the product of $p$ independent beta-distributed random variables:
\begin{align*}
1-\widehat{\rho}^2_{\x\y|\z}\sim \prod_{i=1}^p b_i \text{ where } b_i \sim \makebox{Beta}(\frac{M-r-q-i+1}{2},\frac{q}{2}).
\end{align*}

\end{proof}
It is important to note that this distribution is invariant to the actual error covariances $\R_{\x\x|\z}$ and $\R_{\y\y|\z}$, which are themselves functions of the underlying composite covariance matrix $\R$ of eqn (2).  Moreover, this distribution result specializes to the case $r=0$, in which case partial coherence is ordinary multivariate-multivariate coherence, and the distribution of the  coherence statistic is   $1-\Lambda(p, M-q, q)$. When $p=1$, then partial coherence is multivariate coherence, with distribution $1-\Lambda(1, M-q, q)$.  
 In \cite{Bartlett1954} Bartlett derived the 
asymptotic distribution for large $M$,
$$
-(M-r-\frac{p+q+1}{2})\log(1-\widehat{\rho}^2_{\x\y|\z})\sim \mathcal{X}^2_{pq},
$$
where $\chi^2_{pq}$ denotes a chi-squared sdistribution with $pq$ degrees of freedom.



\section{Experiments}\label{sec:exp}

 If partial coherence is to be used to answer questions of causal influence between signal sequences,  
then there is required a princpled way to map a causality question into a question of conditional dependence among random vectors. 

Our approach to causality analysis of time series will be to start with two channels of signals, 
call them $\{x_n\}$ and $\{y_n \}$, and then to define a third signal channel $\{z_n\}$ of prima facie evidence in such a way that  the question of causality 
may be resolved from an analysis of partial coherence. For example the 
 random vectors $(\x,\y,\z)$ might be $\x=x_s$ and $\y=y_t$,  which are readings of the signal sequences 
at the two times $s$ and $t$. 
In some applications, the prima facie evidence $\z$ might be defined as the past of the time series $\{y_n\}$ at time $t$, i.e.,  ${\bf z}=\{y_n, n< t\}$. In other applications the prima facie evidence might be  ${\bf z}=\{x_n, n< t\}$. As a practical matter in testing from experimental data, this past will be finite. More commonly, $\x$ is assembled from the finite past of $\{x_n\}$ up to time $t-1$, $\y$ is defined to be $y_t$,  and $\z$ is assembled from  the finite past of $\{y_n\}$ up to time $t-1$.  With $\x, \y, \z$ so defined, we can compute the partial coherence statistic $\rho_{\z}^2(s,t)\doteq \rho^2_{x_sy_t|\z},$ or more generally $\rho_{y_t\x|\z}^2$. 
The statistic $\rho_{\z}^2(s,t)$ vs the pair $(s,t)$ reveals  pairwise causality 
with respect to prima facie evidence $\z$. The statistic $\rho_{y_t\x|\z}^2$ reveals the causal influence of the finite past of $\{x_n\}$ on $y_t$, conditioned on the finite past of $\{z_n\}$.

A value of partial coherence below a threshold is taken to be a finding of non-causality at significance level $\alpha$, whereas a value of coherence above this threshold is taken only to be an indication of causality with respect to this specific prima facie evidence and a second-order analysis of partial coherence. New evidence and/or new methods of analysis may invalidate an indication of causality. In the multivariate normal case, no new evidence or method of analysis can invalidate the finding  of non-causality, as we have found prima facie evidence and a method of analysis that indicates non-causality at significance level $\alpha$. An equivalent  view is that the likelihood ratio test is a null hypothesis test for non-causality. 

To demonstrate the use of partial coherence for resolving the question of causality, we present in this section a demonstration and an experiment.  The demonstration is an analytical calculation of partial coherence for a known bi-variate time series which may model past-causal, future-causal, or mixed-causal influence of one time series upon the other, depending on the choice of coupling parameters between the two time series. The experiment is a re-running of the Barnett and Seth experiment \cite{Barnett1}. 

\subsection{ Demonstration of Partial Coherence for  Past-causal, Future-causal, and Mixed-Causal Models}

Consider the following  multivariate  linear system, 
 
	\begin{align*}
\x_n&=\sum_{k=-\infty}^{+\infty} \BH_k{\bm \mu}_{n-k}\\
\y_n&=\sum_{k=-\infty}^{+\infty} \BG_k {\bm \nu}_{n-k}+\sum_{k=-\infty}^{+\infty} \BF_k\x_{n-k},
\end{align*}
where  ${\bm \mu}, {\bm \nu}$ are  multivariate unit variance  white noises.  The two time series are correlated with auto- and cross-covariance matrices 
\begin{align*}
\R_{\x\x}[m]&={\mathbb E}[\x_n\x_{n+m}]=\sum_k \BH_k\BH_{k+m},\\
\R_{\y\y}[m]&={\mathbb E}[\y_n\y_{n+m}]\nonumber\\
&=\sum_k \BG_k\BG_{k+m}+\sum_{k,l}\BF_k\R_{\x\x}[m+k-l]\BF_l,\\ \nonumber
\R_{\x\y}[m]&={\mathbb E}[\x_n\y_{n+m}]=\sum_k \BF_k\R_{\x\x}[m-k]. \nonumber
\end{align*}
When these infinite-dimensional moving average (MA) models arise as finitely-parameterized autoregressive-moving average (ARMA) models, then these infinite sums may be evaluated analytically.  

 For our demonstration of partial coherence, we assume the time series are scalar-valued. We set the filter coefficients 
at $h_k=h_0a^k$ and $g_k=g_0b^k$ for $k \geq 0$ where $h_0=0.8$, $a=0.1$, $g_0=0.7$, and $b=0.7$. The coefficients  $\{f_k\}$, are the filtering coefficients that determine the causality  between $\{y_n\}$ and $\{x_n\}$.  For any selected pair of times $s,t$, we evaluate the partial coherence between $x_s$ and $y_t$.   We consider two sets of prima facie evidence:  $\z$ is the past of $\{y_n\}$ up to time  $t-1$, and $\z$ is the past of  $\{x_n\}$ up to time $t$ with $x_s$ excluded. 
Partial coherence 
is therefore
$\rho^2_{\z}(s,t)\doteq \rho^2_{y_t x_s|\z}$.

We consider three different scenarios.  Case I: 
conditioned on the finite past of the input $x$, the output $y$ is linearly independent 
of any input outside this finite past which includes the future of $x$; Case II:  
conditioned on the finite future of the input $x$ 
the output $y$ is linearly independent of any input outside this finite future which includes the past of $x$; Case III; conditioned the finite past and finite future of the input $x$, the output $y$ is independent of any inputs outside this finite past and future.

\vspace{0.1in}
\noindent {\bf Case I}:   $\{y_n\}$ is past-causally dependent on $\{x_n\}$: 
\begin{align*}
y_n=& .7\sum_{k=0}^{+\infty} (.7)^k \nu_{n-k}+ .7x_n+.8x_{n-1}+ .7x_{n-2}+ .6x_{n-3}\\
x_n=&.8\sum_{k=0}^{+\infty} (.1)^k \mu_{n-k}.
\end{align*}
\noindent {\bf Case II}:  $\{y_n\}$ is future-causally dependent on $\{x_n\}$:
  \begin{align*}
   y_n=& .7\sum_{k=0}^{+\infty} (.7)^k \nu_{n-k}+.7x_n+.8x_{n+1}+ .7x_{n+2}+ .3x_{n+3}\\
x_n=&.8\sum_{k=0}^{+\infty} (.1)^k \mu_{n-k}.
   \end{align*}   
\noindent {\bf Case III}: the mixed case:
  \begin{align*}
 y_n=& .7\sum_{k=0}^{+\infty} (.7)^k \nu_{n-k}+.4x_{n-2}+.8x_{n-1}\\
 &\quad +.7x_{n}+ .8x_{n+1}+.4x_{n+2}\\
x_n=&.8\sum_{k=0}^{+\infty} (.1)^k \mu_{n-k}.
   \end{align*} 

 \begin{figure}[htbp] 
\centering
\includegraphics[width=0.8\linewidth]{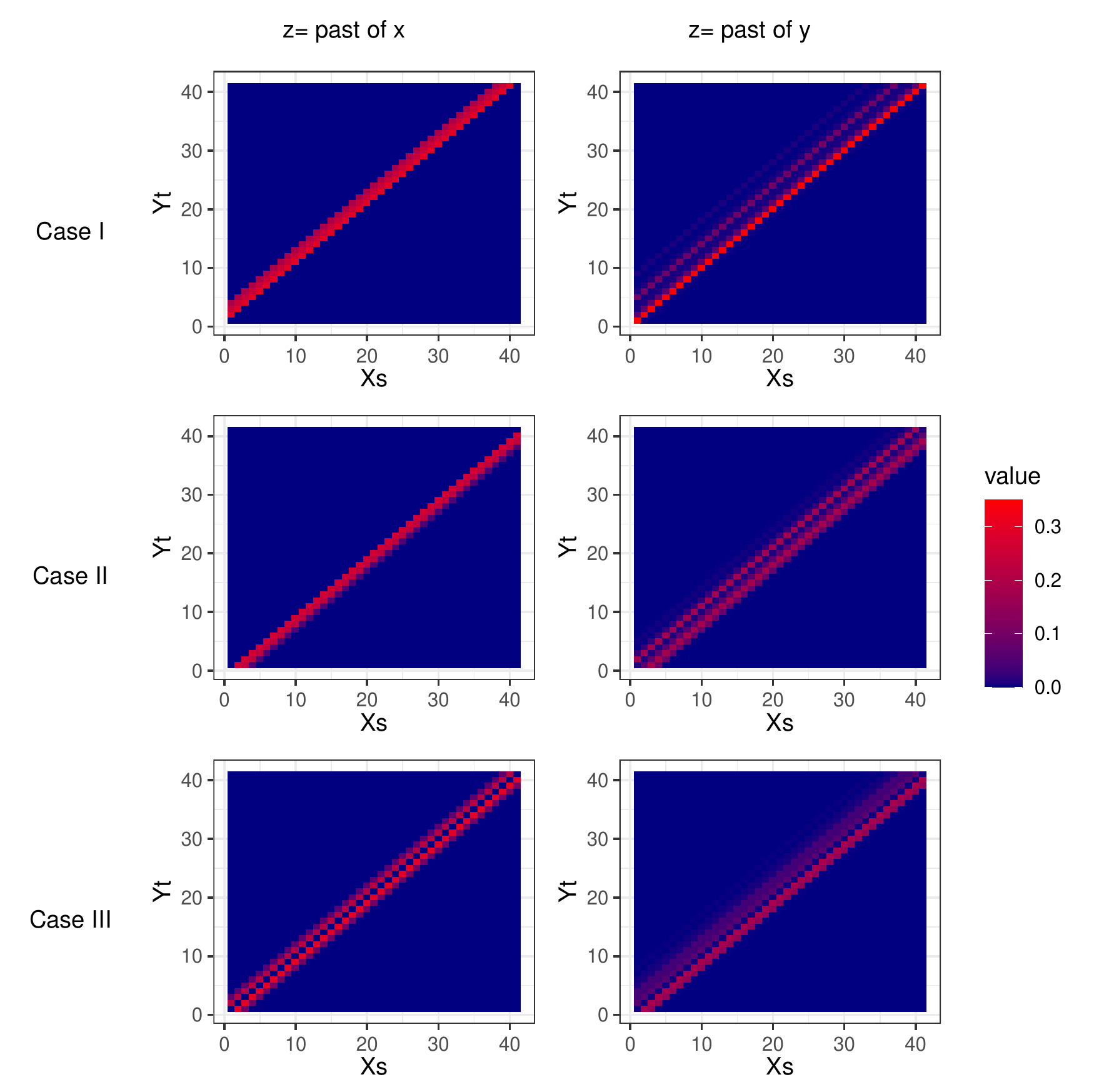}
\caption{Pairwise partial  coherence between $x_s$ at time $s$ (horizontal axis) and $y_t$ at  time $t$ (vertical axis). The prima facie evidence $\z$ is the past of $\{x_n\}$ up to time $t$ in the left column,  and the past of $\{y_n\}$ up to time $t-1$ in the right column; (top row) $\{y_n\}$ is past-causally dependent on $\{x_n\}$; (middle row) 
$\{y_n\}$ is future-causally dependent on $\{x_n\}$;  (bottom row) $\{y_n\}$ has mixed past- and future-causal  dependence on $\{x_n\}$.} 
\label{fig:exp1}
\end{figure}

The resulting pairwise partial coherence maps are illustrated in Figure~\ref{fig:exp1}.  The three rows show the results of case I, II, and III, respectively. Each panel of Figure~\ref{fig:exp1} illustrates the partial coherence between $y_t$ and $x_s$, with the prima facie evidence being the past of $\{x_n\}$ on the left column, and the past of $\{y_n\}$ on the right column. As we already know,  the partial coherence are conditioned on available prima facie evidence. With $\z=\{x_t, x_{t-1}, \ldots\}$, (Notice that $x_s$ is excluded from $\z$ if $s\leq t$.),   $\rho^2_{\z}(s, t)=0$ for any $s>t$ or $s<t-3$ in Case I (top row),   which suggests that future and long-past values of $\{x_n\}$  do not contain any second-order information about $y_t$ given this $\z$ (In fact, they are conditionally independent in the MVN case). In Case II the future-causal case (middle row), $\rho^2_{\z}(s, t)=0$ for any $s\leq t$ since in this case the past of $\{x_n\}$ does not contain unique information about $y_t$ given $\z$.  In Case III (bottom row)  the mixed case where $y_t$ depends on both the past and future of $x$, $\rho^2_{\z}(s, t)>0$ for both $s\leq t$ and $s>t$. In summary, there is an asymmetry in the past-causal and future-causal cases that bears comment.  In Case I, $\rho^2_{\z}(s, t)$ for $s<t$ is non-zero, but weak, compared with $\rho^2_{\z}(s, t)$ for $s>t$ in Case II. This effect is explained by noting that for these demonstrations our conditioning is always on the past of $\{x_n\}$ at time $t$.  The right column of Figure~\ref{fig:exp1} shows the partial coherence between $y_t$ and $x_s$ given the past of $\{y_n\}$, i.e., $\z=\{y_{t-1}, y_{t-2},\ldots\}$ for the three cases. Here the partial correlation generally is nonzero for both $s<t$ and $s>t$. The reason is that $\{x_n\}$ is an auto-regressive process and $\{y_n\}$ is a finite moving-average of $\{x_n\}$. When the prima facie is the past of $\{y_n\}$, $x_s$ still contains some unique information about $y_t$ due to the
infinite memory of $\{x_n\}$.  While all three cases are Granger causal, the choice of prima facie evidence
as the past of input time series $\{x_n\}$ leads to clear separation between past-causal, future-causal, and the mixed-causal influence, whereas the conventional choice of prima facie evidence as the past of the output time series $\{y_n\}$ 
does not resolve the direction of causal influence.





\subsection{The experiment of Barnett, et al.}

 In this subsection, we replicate the experiment in Barnett et al. \cite{Barnett1} and use the partial coherence statistic to test for time-domain causality.   As in \cite{Barnett1}, we define $\y=y_t$, $\z=[y_{t-1}, y_{t-2}, \ldots]$, and $\x=[x_{t-1}, x_{t-2}, \ldots]$. The causality from time series $\{x_n\}$ to  $\{y_n\}$ is measured by the partial coherence between $\x$ and $\y$ given $\z$.  Straightforward calculation shows that 
partial coherence $\rho^2_{\x\y|\z}$ is a monotone function of the causality measure in \cite{Barnett1}.  However, as a {\em statistic}, our 
estimation of the causality measure is very different. The 
test statistic in Barnett et al. is computed 
by replacing state-space parameters by their approximate maximum likelihood estimates, 
and inserting these estimates into an analytical formula for their ratio of determinants.  The result is an approximate likelihood ratio for an ARMA model of measurements. 
Our test statistic is the ordinary likelihood ratio for testing the null hypothesis that partial correlation is zero. It is model-free, requiring only the computation of partial coherence from sample correlations. 
To compare the two approaches, we replicate the experiment in \cite{Barnett1} by generating 
bivariate $ARMA(r ,1)$ time series data.  
\begin{figure}[htbp] 
\centering
\includegraphics[width=0.4\linewidth]{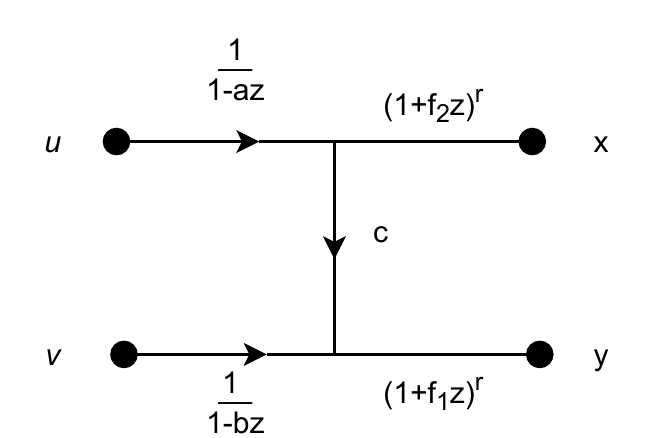}
\vspace{0.2in}
\includegraphics[width=.4\linewidth]{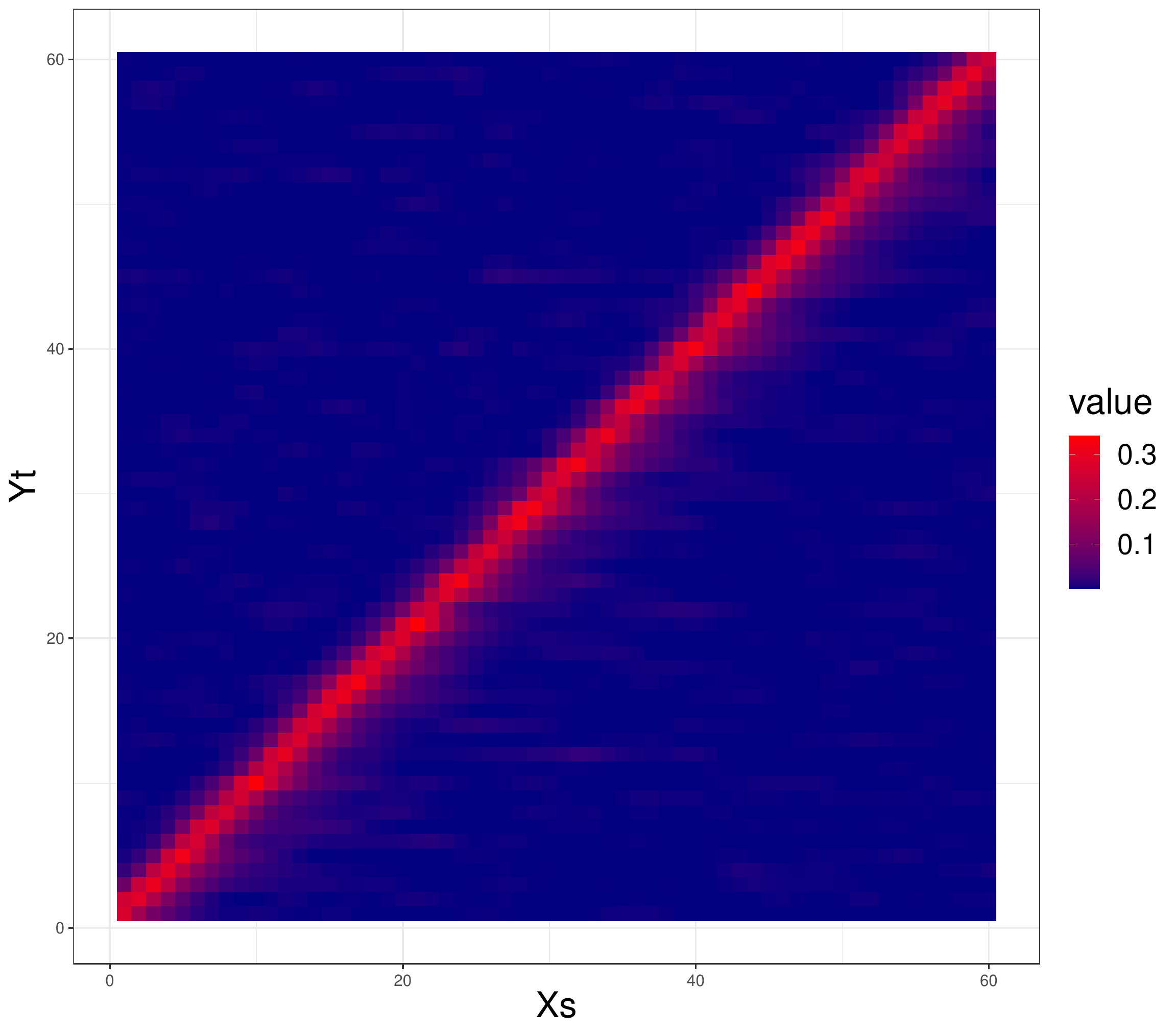}
\caption{Left: The time series model in the simulation study from Barnett et al. \cite{Barnett1} where $\mu$ and $\nu$ are time series of white noise. Right: Pairwise partial  coherence between $x_s$ at time $s$ ( horizontal axis) and $y_t$ at  time $t$ (vertical axis) with the prima facie evidence $\z$ is the past of $\{y_n\}$ up to time $t-1$. } 
\label{fig:BarnettExpDiag}
\end{figure}	

 Start with a bivariate auto-regressive (AR) time series $\{[\eta_{1t}, \eta_{2t}]^T\}$, 
\begin{align*}
(1-az)\eta_{1t} &=
cz\eta_{2, t}+\mu_{t},\\
(1-bz)\eta_{2t} &= 
\nu_{t},
\end{align*}
where $z$ is the backward shift or delay operator: $z\{\eta_t\}=\{\eta_{t-1}\}$.
The bivariate auto-regressive moving average (ARMA) time series 
are generated as 
\begin{eqnarray}
\left[\begin{array}{cc} y_{t} \\ x_{t}  \end{array}\right]= \left[\begin{array}{cc} (1+f_1z)^r & 0 \\ 0&(1+f_2z)^r  \end{array}\right] \cdot  \left[\begin{array}{cc} \eta_{1t} \\ \eta_{2t}  \end{array}\right] . \label{eqn:Barnett}
\end{eqnarray}
The circuit 
diagram in the left panel 
of Figure.~\ref{fig:BarnettExpDiag}  shows clearly the dependence of the  time series $\{y_n\}$ on $\{x_n\}$ through the coupling parameter $c$.
The MA order $r$ varies from $0$ to $10$, but there are only two parameters, $f_1=0.6$ and $f_2=0.7$ determining these $r$ MA parameters. The AR parameters are set to $a=0.9, b=0.8$, $
	c=\sqrt{e^{-F}(e^F-1)(e^F-b^2)}$ with the transfer entropy $F=0.02$. The true partial coherence in this case is $\rho_{\x\y|\z}^2=1-e^{-0.02}\approx 0.02$ and $GG=e^{-0.02}$.  

To replicate the Barnett, et al. experiment, we choose 
$1000$ time samples 
to estimate $\rho^2_{\x\y|\z}$ from second-order statistics, and 
this experiment is replicated 
$10,000$ times to obtain Monte-Carlo estimates of performance.  The vector $\y$ is defined to be $y_t$;  $\x$ and $\z$ are truncated to finite  $T$-dimensional vectors with  $\x=[x_{t-1}, \ldots, x_{t-T}]$ and $\z=[y_{t-1}, \ldots, y_{t-T}]$.  So the sample covariance matrices for these vectors are just $10\times10$. Using the null distribution given in Theorem~1, we set the rejection region with significance level $0.05$.  As a practical matter, $(10,000)\times (1000)$ samples are generated and consecutive windows of $1000$ samples are constructed 
to obtain the 10,000 replications. Because of the weak dependence between windows, these are not actually $1000$ {\em independent} realizations, so our analytical formula for the null distribution does not perfectly predict the experimentally-determined significance level. But it is very close. In fact, when we design for a significance level of $\alpha=0.05$ we  achieve an experimental level of $0.044.$   

Figure~\ref{fig:simBarnett} displays our results based on the model-free test in Section~\ref{sec:test}.  The left panel plots the power of our test statistic with respect to the order of the underlying ARMA model, at significance level $0.05$. This power is relatively invariant to MA order, whereas the experimental results in \cite{Barnett1} show decreasing power as the MA order increases, as a result of the increased model complexity. 
The model-free coherence statistic for  memory $T=10$ has power about $0.90$ regardless of the MA order. This power is slightly lower than the result of Barnett, et al. at low MA orders, and higher at higher MA orders.   Power is relatively invariant to MA order, and it is not susceptible to model mismatch, as no state space model is assumed. 
 An appropriate value of $T$ can be selected  by cutting off  the empirical auto-correlations at a pre-determined threshold or comparing 
AIC or BIC values for an auto-regressive  model fitted to the sampled time series data.   In the right panel of  Figure~\ref{fig:simBarnett}, we plot the ROC curve (power vs size) to demonstrate that probability of correctly rejecting the null hypothesis (the power) depends on the probability of incorrectly rejecting it (the size). 
We also plot in the right panel of  Figure~\ref{fig:BarnettExpDiag}  the pairwise partial coherence between $y_t$ and $x_s$ given the  prima facie evidence $\z=[y_{t-1}, \ldots, y_{t-T}]$. These pairwise partial coherences are not the partial correlation $\rho^2_{\x\y|\z}$ that is used in the hypothesis test. Rather they are pairwise partial coherences, $\rho^2_{x_s,y_t|\z}$, plotted only for intuition,  
for the case 
$c=1$.   
It can be seen that the partial coherence given the past of $\{y_n\}$ is non-zero 
for 
short ranges of $s>t$ and $s<t$.

\begin{figure}[htbp] 

\centering
\includegraphics[width=0.4\linewidth]{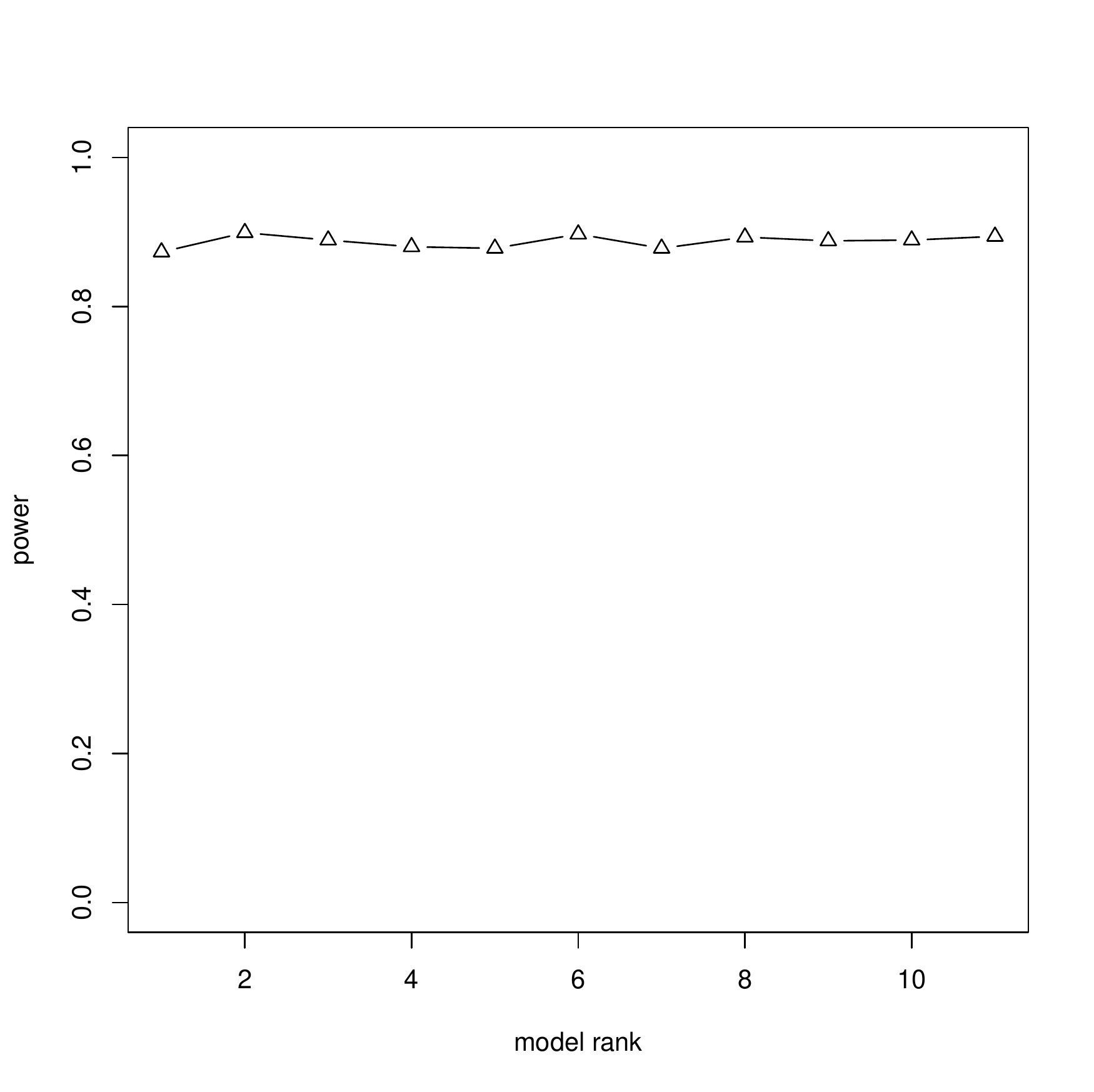}
\includegraphics[width=0.4\linewidth]{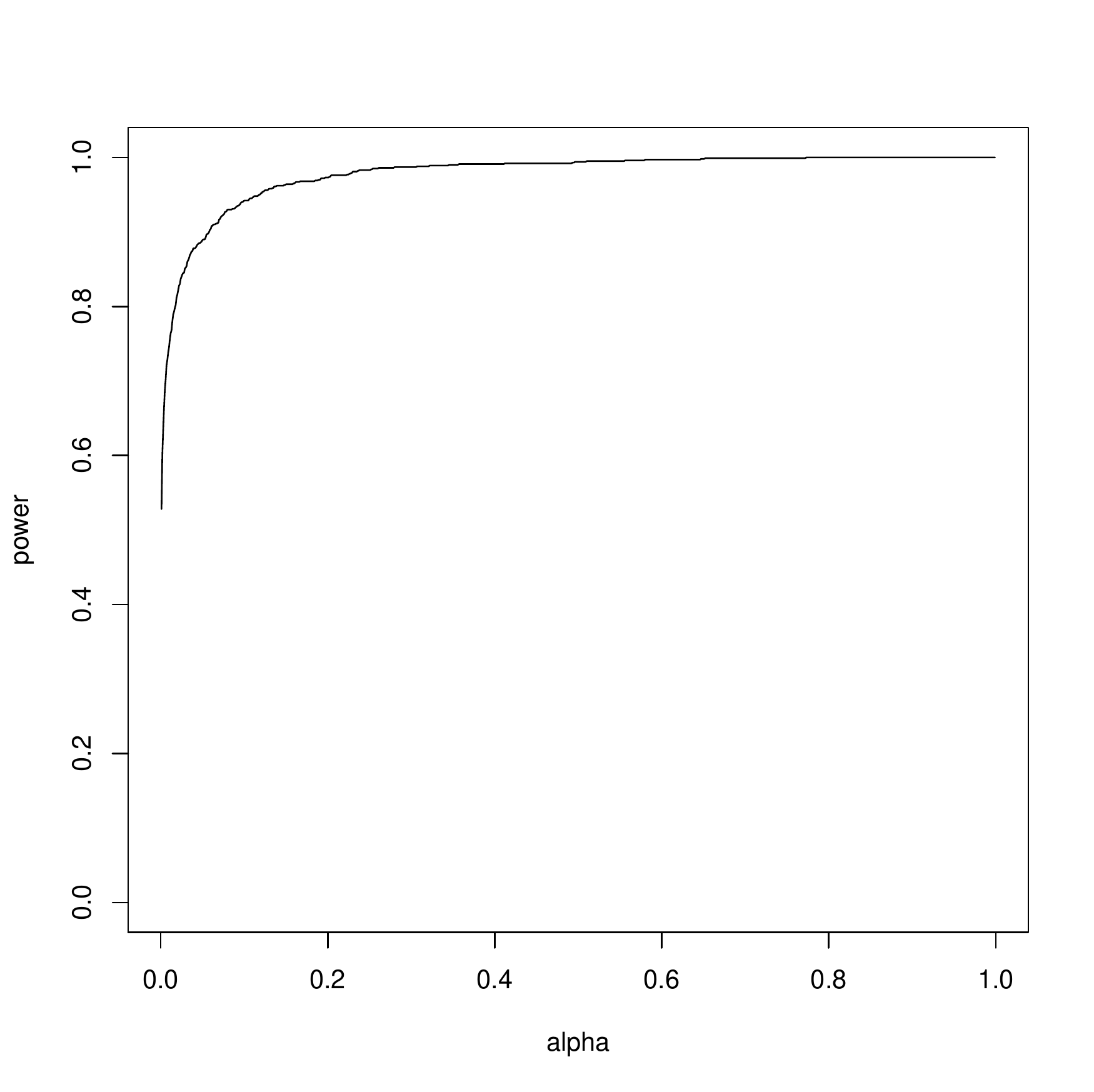}
\caption{The simulation study from Barnett et al. \cite{Barnett1}. (Left) Power versus the MA order of the ARMA time series. 
(Right) Receiver operating characteristic (ROC) showing power versus size.} 
\label{fig:simBarnett}
\end{figure}

\section{Discussion}\label{sec:discuss}

In this paper, our  objective has been to explore the use of partial coherence  among finite-dimensional vectors, constructed from  signal sequences, 
as a scale-invariant statistic to 
assess 
{\em causal influence of one signal sequence on another}.  %
Of course these signal sequences might well be space series or space-time series, in which case the idea of past-, future- or mixed- causal influence is also a physically meaningful question \cite{Ebert-Uphoff:2012}. 
The partial coherence statistic we argue for is a monotone function of  conditional KL divergence, conditional  mutual information,  conditional information rate, and the Granger-Geweke causality measure in the multivariate normal case. It is  a measure of second-order partial correlation in the Hilbert space of second-order random variables, and it is resolved by partial canonical correlations. The sampled-data version of partial  coherence  is in fact a likelihood ratio statistic whose null distribution 
is a Wilks Lambda.  It  may be computed from the sample information matrix.  Importantly, from this null distribution, a threshold on partial coherence may be set to control the confidence level $1-\alpha$.

The key idea in the application of partial coherence to questions of  causal influence  is to define three channels of signals 
so that they code for the 
question of interest. Then finite-dimensional vectors are constructed from these channels and their composite covariance matrix is estimated without appeal to an underlying generative model.  Thus the method is model-free.

We have demonstrated 
the methodology for two particular experiments in  causal influence. One of these experiments demonstrates how partial coherence may be used to map out a partial coherence surface, parameterized by two time variables, and the other replicates the experiment of  Barnett and Seth \cite{Barnett1}. For this latter experiment we demonstrate the power of our method and plot its experimentally computed Receiver Operating Characteristic (ROC). The conclusion of the test is that the performance of a model-free coherence test for  causal influence is competitive with a model-based coherence test. Moreover, the null distribution of the model-free test is invariant to the underlying generative model for the measurements. Consequently, as a null hypothesis test, the partial coherence test is not vulnerable to model bias. Of course in 
applications where a model might be well-established, and measurements are plentiful, a method like that of the Barnetts and Seth \cite{Barnett1}, \cite{Barnett2} might outperform a model-free method such as model-free partial coherence. In applications where there is no generative model like an ARMA model, then the method of second-order partial coherence is compelling. 

{\bf Acknowledgment.}  We thank Profs. Sonia and Tomas Charleston Villalobos for piquing our interest in causality, and  Prof. Todd Moon for alerting us to the work of the Barnetts and Seth, which led us to the work of Geweke. \\

\bibliographystyle{IEEEbib}

 \end{document}